\bfseries\color{black},
\itshape\color{black},columns=fullflexible,
\footnotesize\color{gray},
\newcounter{cntTheorem}
\newtheorem{definition}[cntTheorem]{Definition}
\newtheorem{lemma}[cntTheorem]{Lemma}
\newtheorem{proposition}[cntTheorem]{Proposition}
\newtheorem{corollary}[cntTheorem]{Corollary}
\newcommand{\dk}{\ensuremath{\texttt{DK}}{}}
\newcommand{\sdk}{\ensuremath{\texttt{SDK}}{}}
\newcommand{\shd}{\ensuremath{\texttt{SHD}}{}}
\newcommand{\smallerthan}{\ensuremath{\sqsubseteq}}
\newcommand{\samesize}{\ensuremath{\equiv}}
\renewcommand{\geq}{\geqslant}
\renewcommand{\leq}{\leqslant}
\renewcommand{\phi}{\ensuremath{\varphi}}
\renewcommand{\epsilon}{\ensuremath{\varepsilon}}
\newcommand*\colvec[1]{
        \global\colveccount#1
        \begin{pmatrix}
        \colvecnext
}
\def\colvecnext#1{
        #1
        \global\advance\colveccount-1
        \ifnum\colveccount>0
                \\
                \expandafter\colvecnext
        \else
                \end{pmatrix}
        \fi
}
\newcommand{\deff}{\coloneqq}
\newcommand{\NN}{\ensuremath{\mathbb{N}}}
\newcommand{\RR}{\ensuremath{\mathbb{R}}}
\newcommand{\Mm}{\ensuremath{\mathcal{M}}}
\begin{document}

\title{The SuperM-Tree: Indexing metric spaces with sized objects}

\author[1]{J\"org P. Bachmann}
%\author[2]{Ulf Leser}
\affil[1]{ \texttt{joerg.bachmann@informatik.hu-berlin.de}}
%\affil[2]{ \texttt{leser@informatik.hu-berlin.de}}
%\affil[2]{ \texttt{IDENT@ho.me}}
%\affil[1]{ Humboldt-Universität zu Berlin, Germany}
\date{\today}

\maketitle

\begin{abstract}
%    Fast search and join algorithms in relational database management systems (RDBMS) benefit from index structures used to access elements of a relation.
%    Common examples include the B$^+$-Tree for linearly ordered data and the R$^*$-Tree for spatial data.
%    The M-Tree extends both index structures canonically to metric spaces.
%%
%    In the area of similarity search many domain specific solutions exist, e.\,g. for set containment queries and subsequence queries.
%    The M-Tree cannot solve any of these tasks because of the symmetry of metric distance functions.
%
%    In this work, we introduce metric containment (MC) distance functions to measure the similarity of an object to some part of another object (e.\,g. a subset or a subsequence).
%    MC distance functions are not symmetric and hence not applicable with the M-Tree.
%    We propose the Metric Containment Tree (MC-Tree) as an extension of the M-Tree to provide an index structure for metric containment queries.
%%
%    We show, that metric containment queries subsume set containment queries and subsequence queries with various distance functions, including but not limited to the Euclidean distance (on windows), the Hausdorff distance (on subsets), the Edit distance and the Dog-Keeper distance (on subsequences).
%    Also, we propose a metric containment join algorithm for two sets utilizing the MC-Tree.
%
    A common approach to implementing similarity search applications is the usage of distance functions, where small distances indicate high similarity.
    In the case of metric distance functions, metric index structures can be used to accelerate nearest neighbor queries.
    On the other hand, many applications ask for approximate subsequences or subsets, e.\,g. searching for a similar partial sequence of a gene, for a similar scene in a movie, or for a similar object in a picture which is represented by a set of multidimensional features.
    Metric index structures such as the M-Tree cannot be utilized for these tasks because of the symmetry of the metric distance functions.
    
    In this work, we propose the SuperM-Tree as an extension of the M-Tree where approximate subsequence and subset queries become nearest neighbor queries.
    In order to do this, we introduce metric subset spaces as a generalized concept of metric spaces.
    Various metric distance functions can be extended to metric subset distance functions, e.\,g. the Euclidean distance (on windows), the Hausdorff distance (on subsets), the Edit distance and the Dog-Keeper distance (on subsequences).
    We show that these examples subsume the applications mentioned above.
\end{abstract}

\section{Introduction}
\label{sec:introduction}

% uniquitous index structures
The most common index structure used in relational database management systems (RDBMSs) is the B$^+$-Tree \cite{BPlusTree}, which provides fast range queries on linearly ordered data.
The R$^*$-Tree \cite{RStarTree}
is the usual index structure used to index multi dimensional data and offers the capability for fast multi dimensional range queries.

% domain specific solutions
Numerous specific index structures exist to offer more expressive queries on more complex data types.
This includes index structures for sets and set containment joins \cite{PRETTI,PIEJoin,SetContainmentJoin}, for strings and similarity search regarding the Edit distance \cite{RCSI}, and for nearest neighbor queries in any metric space \cite{mtree,mvptree,covertree,mindex}.

% Metric index structures
On the other hand, metric index structures are more generic and support range queries on a wide range of different complex data types.
Examples are the Euclidean distance on sequences of the same length, the Edit distance (ED) and the Dog-Keeper distance (DK) on sequences or multi-dimensional trajectories of arbitrary lengths \cite{WDK,ImprovedDK}, and the Hausdorff distance on sets (e.\,g. on sets of integers and sets of feature vectors).
However, the symmetry of a metric strongly constrains the expressiveness of the queries and prevents containment queries which ask for subsequences, subsets, and the like.

% Metric containment spaces
In order to support approximate subset queries, we need to disregard the symmetry of the metric axioms and we need a relation regarding the size of the objects.
Hence, we introduce the \emph{metric subset space} $(\Mm,d,\smallerthan)$ consisting of a set of objects $\Mm$, an (asym-) metric function $d$ and a total preorder $\smallerthan$ ordering the objects by their size.
We also reduce the necessity of the triangle inequality, such that the triangle inequality $d(x,z)\leq d(x,y)+d(y,z)$ only needs to hold for objects $x,y,z\in\Mm$ with $x\smallerthan y\smallerthan z$.

% Query type
Based on the M-Tree, we propose the SuperM-Tree which indexes metric subset spaces.
The SuperM-Tree supports $k$ nearest neighbor and $\varepsilon$ nearest neighbor queries:
Given a query object $q$, the result of a \emph{sub query} consists of objects $p$, such that $p\smallerthan q$ ($p$ is not larger than $q$) and $p$ is similar to a part of $q$ (regarding the metric subset distance function).
As a demonstration of the generality of metric subset spaces, we provide several examples including approximate set containment queries and subsequence queries with various distance functions, such as the Euclidean distance (on windows), the Hausdorff distance (on subsets), and the Dog-Keeper distance (on subsequences).

In summary, we propose a general index structure for database systems, which supports natural types of queries for a wide range of multimedia data types.

The rest of the paper is structured as follows:
We describe the concept of subset metrics in Section~\ref{sec:asymmetricspaces} including examples.
Section~\ref{sec:datastructure} introduces the datastructure of the SuperM-Tree and the algorithms for insertion and searching of objects.
In Section~\ref{sec:evaluation} we evaluate the efficiency of the SuperM-Tree on three different subset distance functions and show that the speedup against a linear scan search algorithm increases with increasing size of the datasets.

\subsection{Notation}

A defines B is denoted by $A\deff B$ or $A:\Leftrightarrow B$.
For finite sequences $T$, $|T|$ denotes the number of elements in that sequence and $T_i$ denotes the $i$-th element in the sequence.
Subsequences of $T$ starting at $i$ with a length of $\ell$ are denoted by $T_i^\ell$.
Positive integers including zero are denoted by $\NN$ and positive real numbers including zero are denoted by $\RR^{\geq 0}$.

\section{Metric Subset Spaces}
\label{sec:asymmetricspaces}

Consider a set of objects $\Mm$, for example sequences of arbitrary lengths or sets of $k$-dimensional vectors.
Natural queries put objects in a certain ``containment'' relationship when they ask for subsets or subsequences.
We describe this containment relationship by a total preorder (denoted with '$\smallerthan$').
Intuitively, this relationship requires for each pair of objects $x$ and $y$, that $x$ is either not larger, not smaller, or neither larger nor smaller than $y$.

We also relax the triangle inequality by only asking for it on transitive chains, i.\,e. $d(x,z)\leq d(x,y)+d(y,z)$ if $x\smallerthan y\smallerthan z$.

\begin{definition}[Total Preorder]
    A total preorder on $\Mm$ is a relation $\smallerthan$ on $\Mm\times\Mm$, such that
    \begin{align*}
        \forall x\in\Mm:& x\smallerthan x \\
        \forall x,y\in\Mm:& x\smallerthan y \lor y\smallerthan x\\
        \forall x,y,z\in\Mm:& x\smallerthan y \land y\smallerthan z \longrightarrow x\smallerthan z
    \end{align*}
\end{definition}

Note that a total preorder defines an equivalence relation $\samesize$ via
\begin{align*}
    x\samesize y\Longleftrightarrow x\smallerthan y\land y\smallerthan x
\end{align*}

Examples for total preorders include the comparison of the length of time series or the cardinality of finite sets.

\begin{definition}[Metric Subset Space]
    A metric subset space is a 3-tuple $(\Mm, \smallerthan, d)$ consisting of a set $\Mm$, a total preorder $\smallerthan$ on $\Mm$, and a function $d:\Mm\times\Mm\rightarrow\RR^{\geq 0}$ which satisfies the following axioms:
    \begin{align*}
        & S_1)& \forall x,y,z\in\Mm:\,   & x\smallerthan y\smallerthan z \longrightarrow \\
        &     &                          & d(x,z)\leq d(x,y)+d(y,z) \\
        & S_2)& \forall x,y\in\Mm:\,     & d(x,y)=0 \longleftrightarrow x=y
    \end{align*}
\end{definition}
Compared to metric spaces, $S_1$ relaxes the triangle inequality and the symmetry is missing completely.
Hence, metric subset spaces generalize metric spaces in that each metric space with a total preorder also is a metric subset space.

The following Corollary~\ref{cor:reverse} claims, that we can switch the parameters for the total preorder as well as the distance function and we still have a metric subset space.
Intuitively spoken, this converts subset ordering to superset ordering and vice versa.
\begin{corollary}
    \label{cor:reverse}
    Let $(\Mm, \smallerthan, d)$ be a metric subset space, $y\sqsupseteq x :\Longleftrightarrow x\smallerthan y$, and $\tilde d(y,x) \coloneqq d(x,y)$.
    Then, $(\Mm, \sqsupseteq, \tilde d)$ is a metric subset space.
\end{corollary}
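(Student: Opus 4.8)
The plan is to verify the three defining conditions of a metric subset space directly for $(\Mm,\sqsupseteq,\tilde d)$, each time translating the claim back into a statement about the original space $(\Mm,\smallerthan,d)$ by means of the definitions $y\sqsupseteq x\Leftrightarrow x\smallerthan y$ and $\tilde d(y,x)=d(x,y)$. Since $d$ takes values in $\RR^{\geq 0}$, so does $\tilde d$, so only the preorder axioms and $S_1$, $S_2$ remain to be checked.

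First I would show $\sqsupseteq$ is a total preorder. Reflexivity $x\sqsupseteq x$ is just $x\smallerthan x$. For totality, given $x,y\in\Mm$, totality of $\smallerthan$ yields $x\smallerthan y$ or $y\smallerthan x$, i.e. $y\sqsupseteq x$ or $x\sqsupseteq y$. For transitivity, $x\sqsupseteq y$ and $y\sqsupseteq z$ unfold to $y\smallerthan x$ and $z\smallerthan y$, and transitivity of $\smallerthan$ gives $z\smallerthan x$, that is $x\sqsupseteq z$. Next I would dispatch $S_2$: $\tilde d(x,y)=0$ means $d(y,x)=0$, which by $S_2$ for $d$ is equivalent to $y=x$, hence to $x=y$.

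Finally, for $S_1$ with respect to $\tilde d$, suppose $x\sqsupseteq y\sqsupseteq z$. By definition this is exactly the chain $z\smallerthan y\smallerthan x$, so $S_1$ for $(\Mm,\smallerthan,d)$ applies and gives $d(z,x)\leq d(z,y)+d(y,x)$. Rewriting each term via $\tilde d(a,b)=d(b,a)$ turns this into $\tilde d(x,z)\leq \tilde d(x,y)+\tilde d(y,z)$, which is exactly $S_1$ for $\tilde d$, and the corollary follows.

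I do not expect a genuine obstacle here; the one step that deserves a moment of care is the last one, where reversing the preorder also reverses the direction in which the relaxed triangle inequality is permitted to act, so one must confirm that the chain $x\sqsupseteq y\sqsupseteq z$ corresponds precisely to the chain $z\smallerthan y\smallerthan x$ on which $S_1$ of the original space is guaranteed to hold. Everything else is a mechanical unwinding of the definitions of $\sqsupseteq$ and $\tilde d$.
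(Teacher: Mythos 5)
Your proof is correct. The paper states Corollary~\ref{cor:reverse} without giving a proof, treating it as immediate; your direct verification—checking that $\sqsupseteq$ inherits reflexivity, totality and transitivity from $\smallerthan$, dispatching $S_2$ by symmetry of equality, and matching the chain $x\sqsupseteq y\sqsupseteq z$ with the chain $z\smallerthan y\smallerthan x$ so that $S_1$ of the original space transfers to $\tilde d$—is exactly the intended routine unwinding of the definitions.
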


The following three common examples show the generality of this approach.

\subsection{L2 distance on subsequences}
\label{sec:l2subseq}

Let $\Mm$ be a set of real valued sequences (i.\,e. time series) and $S,T\in\Mm$ be such two sequences.
The canonnical total preorder for subsequence search is the ``shorter or equal than'' relation, i.\,e. $S\smallerthan T :\Longleftrightarrow |S|\leq|T|$.
For sequences with $S\smallerthan T$, a common approach for subsequence search is the windowing approach with the Euclidean distance:
\begin{align*}
    d_2(S,T) \deff \min_{j} \sqrt{\sum_i \left( S_i-T_{i+j} \right)^{2}}.
\end{align*}

We show that this model for subsequence distances yields a metric subset space.
\begin{proposition}
    \label{prop:l2metricsubsetspace}
    $(\Mm,\smallerthan,d_2)$ is a metric subset space.
\end{proposition}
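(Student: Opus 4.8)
The plan is to verify the two axioms $S_1$ and $S_2$; the preorder part comes essentially for free, since $S\smallerthan T\iff|S|\leq|T|$ is induced by $\leq$ on $\NN$ via the length function, and $\leq$ is reflexive, total and transitive. I also record a small fact used throughout: for $S\smallerthan T$ the shift $j$ ranges over the finite set $\{0,1,\dots,|T|-|S|\}$, so the minimum defining $d_2(S,T)$ is attained at some admissible shift, which I will single out below.

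Axiom $S_2$ is the routine half. The shift $j=0$ gives $d_2(x,x)=0$; conversely $d_2(x,y)=0$ forces $x_i=y_{i+j}$ for all $i\in\{1,\dots,|x|\}$ at an optimal $j$, and when $|x|=|y|$ the only admissible shift is $j=0$, so $x=y$. The one point that deserves a word is that, read literally, a vanishing windowed distance between a short $x$ and a longer $y$ merely says that $x$ occurs verbatim as a window of $y$; I would therefore argue the forward implication in the equal-length case, where $d_2$ coincides with the ordinary Euclidean metric.

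The real content is $S_1$. Fix $x\smallerthan y\smallerthan z$, so $|x|\leq|y|\leq|z|$, let $j$ be an optimal shift for $d_2(x,y)$ and $k$ an optimal shift for $d_2(y,z)$. The key observation is that $j+k$ is again admissible for aligning $x$ inside $z$: it is nonnegative, and $|x|+(j+k)\leq|y|+k\leq|z|$, using $|x|+j\leq|y|$ and $|y|+k\leq|z|$. Put $v\deff(x_i-y_{i+j})_{i=1}^{|x|}$ and $w\deff(y_{i+j}-z_{i+j+k})_{i=1}^{|x|}$ in $\RR^{|x|}$ (both well defined by the admissibility of $j$ and of $j+k$), so that $v+w=(x_i-z_{i+j+k})_{i=1}^{|x|}$. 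Since $j+k$ is admissible, $d_2(x,z)\leq\norm{v+w}_2$, whence by Minkowski's inequality on $\RR^{|x|}$
\[ d_2(x,z)\;\leq\;\norm{v+w}_2\;\leq\;\norm{v}_2+\norm{w}_2\;=\;d_2(x,y)+\norm{w}_2 . \]
Finally, reindexing by $i'=i+j$ maps $\{1,\dots,|x|\}$ onto $\{1+j,\dots,|x|+j\}\subseteq\{1,\dots,|y|\}$, so $\norm{w}_2^2=\sum_{i=1}^{|x|}(y_{i+j}-z_{i+j+k})^2\leq\sum_{i'=1}^{|y|}(y_{i'}-z_{i'+k})^2=d_2(y,z)^2$, every summand being nonnegative; hence $\norm{w}_2\leq d_2(y,z)$ and $d_2(x,z)\leq d_2(x,y)+d_2(y,z)$.

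I expect the index bookkeeping in those last two steps to be the main---indeed the only---real obstacle: one must check that the composed shift $j+k$ stays in range, and that the block $\{1+j,\dots,|x|+j\}$ of positions it induces in $y$ lies inside $\{1,\dots,|y|\}$ so that the middle sum is dominated by $d_2(y,z)^2$. Both closure properties rely on exactly the nesting $|x|\leq|y|\leq|z|$ and fail without it, which is precisely the reason $S_1$ only demands the triangle inequality along transitive chains; the remainder is Minkowski's inequality applied coordinatewise.
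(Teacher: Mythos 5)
Your triangle-inequality argument is essentially the paper's own proof in slightly different notation: the optimal shifts $j$ and $k$ composed into the admissible shift $j+k$ correspond to the paper's $s$ and $t$ with window start $t+s$, Minkowski's inequality on $\RR^{|x|}$ plays the role of the equal-length triangle inequality, and your bound $\norm{w}_2\leq d_2(y,z)$ is exactly the paper's monotonicity step, so the approaches coincide. Your aside on $S_2$ in fact flags a real subtlety the paper passes over silently (its proof only addresses $S_1$): with the windowed definition, $d_2(x,y)=0$ whenever $x$ occurs verbatim as a window of a strictly longer $y$, so the axiom $d(x,y)=0\leftrightarrow x=y$ only holds as stated when restricted to sequences of equal length, and your restriction to that case is an honest acknowledgment rather than a flaw in your argument.
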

The proposition even holds for subsequences of elements from another metric space (e.\,g. $n$-dimensional vectors).

\begin{proof}
    Consider three sequences $S,T,U\in\Mm$ with $S\smallerthan T\smallerthan U$ and fix $s,t\in\NN$ such that
    \begin{align*}
        d_2(S,T) &= d_2\left( S,T_{s}^{|S|} \right) \\
        d_2(T,U) &= d_2\left( T,U_{t}^{|T|} \right)
    \end{align*}
    \begin{figure}
        \begin{center}
            \includegraphics[width=.7\linewidth]{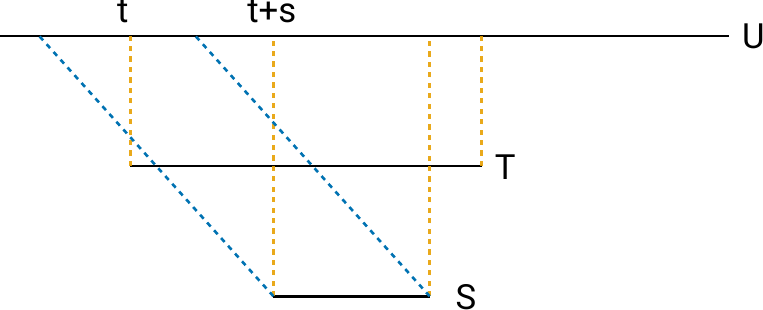}
        \end{center}
        \caption{Sketch for variables choice in proof of Proposition~\ref{prop:l2metricsubsetspace}.
                 The striped lines indicate the best matches of the subsequences.}
        \label{fig:l2metricsubsetspace}
    \end{figure}
    In order to prove the triangle inequality, we need three intermediate inequalities.
    \paragraph*{1:}
    Since $d_2$ is a metric on common length sequences, the following triangle inequality holds:
    \begin{align}
        d_2(S,U_{t+s}^{|S|}) &\leq d_2\left( S,T_{s}^{|S|} \right) + d_2\left( T_{s}^{|S|}, U_{t+s}^{|S|} \right)
        \label{eq:l2triangle_1}
    \end{align}
    \paragraph*{2:}
    The monotonicity of $d_2$ regarding the length yields the second inequality:
    \begin{align}
        d_2\left( T_s^{|S|}, U_{t+s}^{|S|} \right) &= \sum_{i=s}^{s+|S|-1} \left(T_i,U_{t+i}\right)^2 \\
        &\leq \sum_{i=0}^{|T|-1} \left(T_i,U_{t+i}\right)^2 \\
        &= d_2\left( T, U_t^{|T|} \right)
        \label{eq:l2triangle_2}
    \end{align}
    \paragraph*{3:}
    The last inequality uses the properties of the windowing approach, which is looking for the best match:
    \begin{align}
        d_2\left( S,U \right) = \min_{j} d_2\left( S, U_j^{|S|} \right) \leq d_2\left( S, U_{t+s}^{|S|} \right)
        \label{eq:l2triangle_3}
    \end{align}
    Combining inequalities~\eqref{eq:l2triangle_1},\eqref{eq:l2triangle_2},~and~\eqref{eq:l2triangle_3} proves the proposition:
    \begin{align*}
        d_2(S,U) &\leq d_2\left( S, U_{t+s}^{|S|} \right) \\
        &\leq d_2\left( S,T_{s}^{|S|} \right) + d_2\left( T_{s}^{|S|}, U_{t+s}^{|S|} \right) \\
        &\leq d_2\left( S,T_{s}^{|S|} \right) + d_2\left( T, U_t^{|T|} \right) \\
        &= d_2(S,T) + d_2(T,U)
    \end{align*}
\end{proof}

\subsection{Dog-Keeper distance on subsequences}
\label{sec:dksubseq}

Again, let $\Mm$ be a set of real valued sequences, $S,T\in\Mm$, and $\smallerthan$ the length comparison.
We consider the Dog-Keeper distance (\dk):% which is similar to the ubiquitous Dynamic Time Warping (\dtw).
%It only differs to $\dtw$ by taking the maximum value along a warping path instead of the sum of the values along a warping path.
\begin{align*}
    \dk(S, \emptyset) &\deff \infty \quad \dk( \emptyset, T)\deff \infty \quad \dk( \emptyset, \emptyset ) \deff 0\\
    \dk(S,T) &\deff \\
    \min &\begin{cases}
        \max\left\{ |S_1-T_1|, \dk\left( S_1^{|S|-1}, T_1^{|T|-1} \right) \right\} \\
        \max\left\{ |S_1-T_1|, \dk\left( S_0^{|S|}, T_1^{|T|-1} \right) \right\} \\
        \max\left\{ |S_1-T_1|, \dk\left( S_1^{|S|-1}, T_0^{|T|} \right) \right\}
    \end{cases}
\end{align*}
Although this function is defined recursively, common algorithms use dynamic programming to compute the distance with quadratic complexity \cite{computingfrechet}.

Since $\dk$ is able to stretch the ``time'' axis, we generalize the windowing approach to windows of arbitrary length in order to support subsequence comparison:
\begin{align*}
    \sdk(S,T) &\deff \min_{j,\ell} \dk\left( S, T_j^\ell \right)
\end{align*}
The difference to the windowing approach from Section~\ref{sec:l2subseq} is that the length of the subsequence in $T$ is not bound to the length $|S|$.

Similar to the windowed Euclidean distance, $\sdk$ yields a metric subset space.
\begin{proposition}
    \label{prop:dkmetricsubsetspace}
    $(\Mm,\smallerthan,\sdk)$ is a metric subset space.
\end{proposition}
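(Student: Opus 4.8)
The plan is to follow the three-step structure used for Proposition~\ref{prop:l2metricsubsetspace}. Axiom $S_2$ is routine: if $S=T$ then $\sdk(S,T)\leq\dk\!\left(S,T_0^{|T|}\right)=\dk(T,T)=0$ by matching $S$ against all of $T$, and the converse is the corresponding separation property of $\dk$. So the substance is the relaxed triangle inequality $S_1$: assuming $S\smallerthan T\smallerthan U$, we must prove $\sdk(S,U)\leq\sdk(S,T)+\sdk(T,U)$. Mirroring the $d_2$ case, I would first fix optimal windows, i.e.\ indices realising
\begin{align*}
    \sdk(S,T)&=\dk\!\left(S,T_j^{\ell}\right),\\
    \sdk(T,U)&=\dk\!\left(T,U_{j'}^{\ell'}\right),
\end{align*}
together with an optimal discrete Fr\'echet coupling $W$ for $\dk\!\left(T,U_{j'}^{\ell'}\right)$.

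The new ingredient, replacing the length-monotonicity step~\eqref{eq:l2triangle_2}, is a restriction property of such couplings: the sub-path of $W$ formed by the matched pairs whose $T$-coordinate lies in the block $\{j,\dots,j+\ell-1\}$ is itself a valid coupling between $T_j^{\ell}$ and a \emph{contiguous} window $U_{j''}^{\ell''}$ of $U_{j'}^{\ell'}$, hence of $U$. Since its cost is a maximum of terms $|T_i-U_k|$ over a subset of the pairs of $W$, it is at most $\dk\!\left(T,U_{j'}^{\ell'}\right)=\sdk(T,U)$, giving
\[
    \dk\!\left(T_j^{\ell},U_{j''}^{\ell''}\right)\leq\sdk(T,U).
\]
The remaining two ingredients are literal analogues of~\eqref{eq:l2triangle_1} and~\eqref{eq:l2triangle_3}: the ordinary triangle inequality for $\dk$ on the triple $S,\,T_j^{\ell},\,U_{j''}^{\ell''}$, namely $\dk\!\left(S,U_{j''}^{\ell''}\right)\leq\dk\!\left(S,T_j^{\ell}\right)+\dk\!\left(T_j^{\ell},U_{j''}^{\ell''}\right)$, and the ``best match'' property of the outer minimum, $\sdk(S,U)\leq\dk\!\left(S,U_{j''}^{\ell''}\right)$, which holds because $U_{j''}^{\ell''}$ is one of the windows over which $\min_{j,\ell}$ ranges. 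Chaining the three estimates, exactly as in the final display of the proof of Proposition~\ref{prop:l2metricsubsetspace}, then yields $\sdk(S,U)\leq\dk\!\left(S,T_j^{\ell}\right)+\sdk(T,U)=\sdk(S,T)+\sdk(T,U)$.

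I expect the restriction property to be the only real obstacle. To make it precise one recalls that a discrete Fr\'echet coupling is a sequence of index pairs starting at $(1,1)$, ending at $(|T|,\ell')$, and advancing the first index, the second, or both at each step, with cost equal to the maximum of $|T_i-U_k|$ over the pairs $(i,k)$ it contains. Monotonicity of both indices forces the pairs with first index in $\{j,\dots,j+\ell-1\}$ to be consecutive in $W$ and their second indices to sweep a contiguous block $\{j'',\dots,j''+\ell''-1\}$; read in the local coordinates of $T_j^{\ell}$ and $U_{j''}^{\ell''}$, this sub-path runs from $(1,1)$ to $(\ell,\ell'')$ using only admissible steps, which is exactly what the recursive definition of $\dk$ asks of a coupling of these two sequences (first element matched to first, last to last). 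Once this bookkeeping is done the cost bound is immediate, since the sub-path only reuses pairs of $W$. I would also note that the argument nowhere uses $|S|\leq|T|\leq|U|$, so $S_1$ in fact holds for arbitrary triples; the preorder is needed only to fit $\sdk$ into the metric-subset-space framework.
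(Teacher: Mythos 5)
Your argument is essentially the paper's proof: it too fixes optimal windows realising $\sdk(S,T)$ and $\sdk(T,U)$, restricts the optimal coupling of $\dk\left(T,U_{t_a}^{t_\ell}\right)$ to the window of $T$ matched against $S$ to obtain a contiguous window $U_{u_a}^{u_\ell}$ with $\dk\left(T_{s_a}^{s_\ell},U_{u_a}^{u_\ell}\right)\leq\sdk(T,U)$, and then chains the ordinary triangle inequality for $\dk$ with the best-match property of the outer minimum. You merely spell out the coupling-restriction bookkeeping that the paper states informally (``each subsequence of $T$ is mapped to a certain subsequence of $U$''), so the two proofs coincide.
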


\begin{proof}
    Similar to Proposition~\ref{prop:l2metricsubsetspace}, it suffices to prove the triangle inequality for arbitrary $S,T,U\in\Mm$ with $S\smallerthan T\smallerthan U$.
    We fix $s_a,s_\ell,t_a,t_\ell\in\NN$ such that
    \begin{align*}
        \sdk(T,U) &= \dk\left( T,U_{t_a}^{t_\ell} \right) \\
        \sdk(S,T) &= \dk\left( S,T_{s_a}^{s_\ell} \right)
    \end{align*}
    \begin{figure}
        \begin{center}
            \includegraphics[width=.7\linewidth]{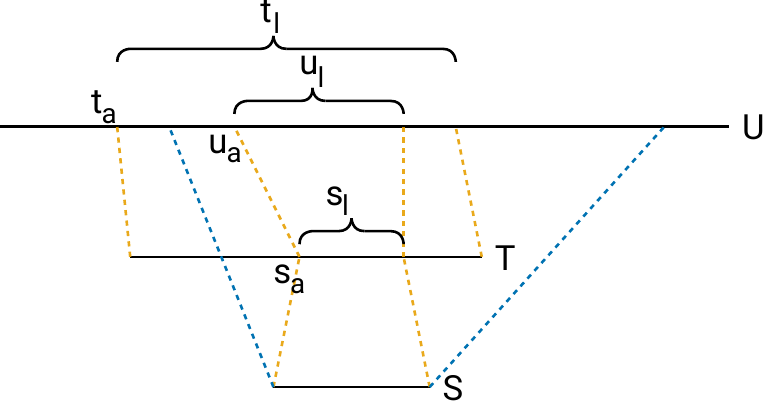}
        \end{center}
        \caption{Sketch for variables choice in proof of Proposition~\ref{prop:dkmetricsubsetspace}.
                 The striped lines indicate the best matches of the subsequences.}
        \label{fig:dkmetricsubsetspace}
    \end{figure}
    For the following thoughts, remember that the computation of $\dk\left( T,U_{t_a}^{t_\ell} \right)$ maps elements of one $T$ to elements of $U$ and thus each subsequence of $T$ to a certain subsequence of $U$. 
    Fixiating $u_a,u_\ell$ such that the subsequence $T_{s_a}^{s_\ell}$ is mapped to $U_{u_a}^{u_\ell}$ implies $\dk\left( T_{s_a}^{s_\ell}, U_{u_a}^{u_\ell} \right) \leq \dk\left( T, U_{t_a}^{t_\ell} \right)$.
    For these subsequences, the following inequality holds since $\dk$ is a metric distance function \cite{WDK}.
    \begin{align}
        \dk\left(S,U_{u_a}^{u_\ell}\right) &\leq \dk\left( S,T_{s_a}^{s_\ell} \right) + \dk\left( T_{s_a}^{s_\ell}, U_{u_a}^{u_\ell} \right)
        \label{eq:dktriangle_1}
    \end{align}
    The first summand of the right hand side of Inequality~\ref{eq:dktriangle_1} equals to $\sdk(S,T)$ by choice of $s_a$ and $s_\ell$.
    As pointed out before, the second summand is a lower bound to $\dk\left( T, U_{t_a}^{t_\ell} \right)$.
    Hence, the triangle inequality for $\sdk$ holds:
    \begin{align*}
        \sdk(S,U) &\leq \dk\left( S, U_{u_a}^{u_\ell} \right) \\
        &\leq \dk\left( S, T_{s_a}^{s_\ell} \right) + \dk\left( T_{s_a}^{s_\ell}, U_{u_a}^{u_\ell} \right) \\
        &\leq \dk\left( S, T_{s_a}^{s_\ell} \right) + \dk\left( T, U_{t_a}^{t_\ell} \right) \\
        &= \sdk\left( S, T \right) + \dk\left( T, U \right)
    \end{align*}
\end{proof}

\subsection{Hausdorff distance on subsets}
\label{sec:hdsubset}

To provide an example for metric subset spaces from another data domain, let the members of $\Mm$ be sets of real values.
Our total preorder compares the cardinalities of sets $A,B\in\Mm$, i.\,e. $A\smallerthan B :\Longleftrightarrow |A|\leq|B|$.
The Hausdorff distance is a common metric distance function on sets of the same size:
\begin{align*}
    &\textsc{Hausdorff}(A,B) \deff \\
    &\quad \max\left\{ \max_{a\in A} \min_{b\in B} |a-b|, \max_{b\in B} \min_{a\in A} |a-b| \right\}
\end{align*}
In more general, the Hausdorff is a metric for sets of arbitrary elements from any other metric space including Euclidean vector spaces.
We stick to sets of real values to keep the examples simple.

In the case of subset comparison, we don't need the symmetry of the Hausdorff distance.
Omitting the second part already yields a subset distance function ($\shd$) with similar semantics to that of Hausdorff:
\begin{align*}
    \shd(A,B) &\deff \max_{a\in A} \min_{b\in B} |a-b|
\end{align*}
While $\shd$ tries to match $A$ to a subset of $B$, the original Hausdorff distance does both directions, i.\,e. $\textsc{Hausdorff}(A,B)=\max\left\{\shd(A,B), \shd(B,A)\right\}$.
\begin{proposition}
    \label{prop:hdmetricsubsetspace}
    $(\Mm,\smallerthan,\shd)$ is a metric subset space.
\end{proposition}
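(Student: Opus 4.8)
The plan is to follow the pattern of the previous two proofs: axiom $S_2$ ought to be essentially immediate, so the substance is the relaxed triangle inequality $S_1$. In fact I expect to prove more than $S_1$ asks for, namely that the directed Hausdorff distance $\shd$ satisfies $\shd(A,C)\leq\shd(A,B)+\shd(B,C)$ for \emph{all} finite sets $A,B,C$ of reals, with no appeal to the cardinality preorder $\smallerthan$ whatsoever; this makes $\shd$ the easiest of the three examples.

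For $S_1$ I would argue pointwise. Fix $A,B,C$ and an arbitrary $a\in A$. Pick $b\in B$ attaining $\min_{b'\in B}|a-b'|$, so $|a-b|\leq\shd(A,B)$, and pick $c\in C$ attaining $\min_{c'\in C}|b-c'|$, so $|b-c|\leq\shd(B,C)$. Then the triangle inequality of $|\cdot|$ on $\RR$ gives
\begin{align*}
    \min_{c'\in C}|a-c'| \;\leq\; |a-c| \;\leq\; |a-b|+|b-c| \;\leq\; \shd(A,B)+\shd(B,C).
\end{align*}
This bound does not depend on the choice of $a$, so taking the maximum over $a\in A$ yields $\shd(A,C)\leq\shd(A,B)+\shd(B,C)$, hence $S_1$. (As in the previous examples, the argument is unchanged if the set elements live in an arbitrary metric space rather than $\RR$.)

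For $S_2$, the forward direction is trivial: $A=B$ gives $\min_{b\in B}|a-b|=0$ for every $a\in A$, so $\shd(A,B)=0$. The converse is the step I expect to be delicate. From $\shd(A,B)=0$ one gets $\min_{b\in B}|a-b|=0$ for each $a\in A$, and since the sets are finite this minimum is attained, so $a\in B$; hence $\shd(A,B)=0$ is equivalent to $A\subseteq B$, which is formally weaker than $A=B$. The reverse implication of $S_2$ therefore holds literally only when $A$ and $B$ are compared within a single $\samesize$-class (where $A\subseteq B$ together with $|A|=|B|$ forces $A=B$), or after a corresponding restriction of $\Mm$; I would make that reading explicit in the write-up. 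Modulo this bookkeeping point, the proposition reduces entirely to the one-line chaining argument above.
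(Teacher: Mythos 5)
Your proof of the triangle inequality is essentially the paper's own argument: your $b$ and $c$ are exactly the nearest-neighbor images $f(a)$ and $g(f(a))$ through which the paper chains the scalar triangle inequality, followed by the same maximum over $a\in A$, so the core of the proposal matches. Your extra observations are also sound and go slightly beyond the paper, which proves only $S_1$: the cardinality preorder is indeed never used, and $\shd(A,B)=0$ only forces $A\subseteq B$, so $S_2$ holds literally only under the kind of restriction you flag, a point the paper passes over silently.
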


\begin{proof}
    As in the proofs of Proposition~\ref{prop:l2metricsubsetspace}~and~\ref{prop:dkmetricsubsetspace}, it suffices to prove the triangle inequality for arbitrary $A,B,C\in\Mm$ with $A\smallerthan B\smallerthan C$.
    Therefore, fixiate mappings $f:A\rightarrow B$ and $g:B\rightarrow C$, such that
    \begin{align*}
        \forall a\in A: f(a) &= \min_{b\in B} |a-b| \\
        \forall b\in B: f(b) &= \min_{c\in C} |b-c|
    \end{align*}
    \begin{figure}
        \begin{center}
            \includegraphics[width=.4\linewidth]{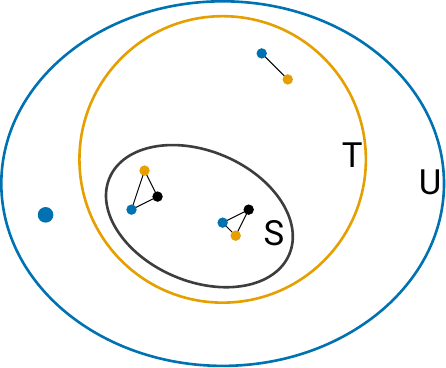}
        \end{center}
        \caption{Sketch for variables choice in proof of Proposition~\ref{prop:hdmetricsubsetspace}.
                 The striped lines indicate the best matches of the subsequences.}
        \label{fig:hdmetricsubsetspace}
    \end{figure}
    i.\,e. $f$ and $g$ map all elements to their corresponding nearest neighbor.
    Hence, $\shd(A,B) = \max_{a\in A} |a-f(a)|$ and $\shd(B,C)=\max_{b\in B} |b-g(b)|$.
    The triangle inequality is a result of the following observation:
    \begin{align*}
        \forall a\in A: \min_{c\in C} |a-c| &\leq |a-g(f(a))| \\
        \leq& |a-f(a)| + |f(a)-g(f(a))| \\
        \leq& \shd(A,B) + \shd\left( f(A), C \right) \\
        \leq& \shd(A,B) + \shd(B,C)
    \end{align*}
\end{proof}

\section{The SuperM-Tree}
\label{sec:datastructure}

The SuperM-Tree is derived from the M-Tree \cite{mtree} and differs in that it indexes metric subset spaces $(\Mm,\smallerthan,d)$ instead of metric spaces.
It is a balanced tree with nodes of a certain capacity (which can be either fixed size or variable sized as in \cite{xtree}).
Inner nodes contain \emph{routing objects} which cover an area of the metric subset space.
The covered area includes all objects in the corresponding subtree.
Leaf nodes contain the actual entries.

Additionally, the SuperM-Tree needs the objects to be ordered according to the total preorder of the metric subset space along each path from the root to the leafs.
That way, it assures that the triangle inequality holds to prune certain subtrees.
The insert and delete algorithms including the split and merge strategies need to be adapted in order to respect the ordering condition.
Due to space limitations, we do not present the algorithms for deletion.

\subsection{Structure of SuperM-Tree Nodes}

Since the structure of a SuperM-Tree is derived from the M-Tree \cite{mtree}, we use the same notation.
Leaf nodes store the indexed (key) objects, whereas internal nodes store \emph{routing objects} which help in pruning and navigating through the tree.

Each routing object $O_r$ is associated with a pointer to the root node $T(O_r)$ of its subtree, called the \emph{covered tree} of $O_r$.
Additionally to the M-Tree, the SuperM-Tree expects that $O_r\smallerthan O_j$ for each object $O_j$ in its covered tree.
The routing objects are also associated with a radius $r(O_r) \geq 0$ called the \emph{covering radius} of $O_r$, as well as the distance $d(P(O_r),O_r)$ to their parent $P(O_r)$.
All indexed objects in the covered tree of $O_r$ are within the distance $r(O_r)$ from $O_r$.
%TODO: distance to parent?

Leaf nodes only store the objects $O_j$ themselves and their distance to the parent $d(P(O_j),O_j)$.
In real world scenarios, additional information (e.\,g. data pointer or tuple identifier) can be stored per object.

\subsection{Similarity Queries}

The SuperM-Tree supports range queries and nearest neighbor queries.
Since nearest neighbor queries simply adapt the search radius analogously to the M-Tree while traversing the tree, we only discuss the range queries here.

For an object $Q\in\Mm$, the range query $NN(Q,r(Q))$ selects all database objects $O_j$ with $d(O_j,Q)\leq r(Q)$ and $O_j\smallerthan Q$.
Algorithm~\ref{alg:ennquery} provides the pseudo code for the range query.
It uses Lemma~\ref{lem:prunetriangle} and \ref{lem:prunetriangle2} to prune subtrees.

\begin{algorithm}
    \caption{Range query}
    \begin{lstlisting}
Input: node $N$, query object $Q$, search radius $r(Q)$
$R\deff\emptyset$
if $N$ is a leaf node
  for each $O_j\in N$
    if $O_j\smallerthan Q$ and $d(O_j,Q)\leq r(Q)$
      $R\deff R\cup \left\{ O_j \right\}$
  return $R$
// else
for each $O_r$ in $N$
  if not $O_r\smallerthan Q$
    skip // prune by ``size''
  if $r(Q) < d(O_p,Q) - d(O_r,O_p) - r(O_r)$
    skip // pruned using Lemma $\ref{lem:prunetriangle2}$
  if $r(Q) < d(O_r,Q)-r(O_r)$
    skip // pruned using Lemma $\ref{lem:prunetriangle}$
  $R\deff R\cup search(T(O_r), Q, r(Q))$
return $R$
    \end{lstlisting}
    \label{alg:ennquery}
\end{algorithm}

\begin{lemma}
    \label{lem:prunetriangle}
    If $d(O_r,Q) > r(Q) + r(O_r)$, then $d(O_j,Q)>r(Q)$ holds for each object $O_j$ with $O_j\smallerthan Q$ in the covered subtree of $O_r$.
\end{lemma}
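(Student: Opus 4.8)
The plan is to exploit the two structural invariants that the SuperM-Tree maintains on the covered subtree of a routing object, together with axiom $S_1$. First I would recall that, by construction, every object $O_j$ in the covered subtree of $O_r$ satisfies $O_r\smallerthan O_j$, and that $O_j$ lies within the covering radius, i.e. $d(O_r,O_j)\leq r(O_r)$. Combining the first invariant with the hypothesis $O_j\smallerthan Q$ produces the transitive chain $O_r\smallerthan O_j\smallerthan Q$, which is precisely the configuration in which the relaxed triangle inequality $S_1$ is guaranteed to apply.

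Next I would invoke $S_1$ on this chain to obtain $d(O_r,Q)\leq d(O_r,O_j)+d(O_j,Q)$, and rearrange to $d(O_j,Q)\geq d(O_r,Q)-d(O_r,O_j)$. Substituting the covering-radius bound $d(O_r,O_j)\leq r(O_r)$ and then the hypothesis $d(O_r,Q)>r(Q)+r(O_r)$ gives $d(O_j,Q)\geq d(O_r,Q)-r(O_r)>r(Q)$, which is exactly the asserted conclusion. No case distinction is needed beyond this single chain of inequalities.

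The only real subtlety — and the reason the size condition $O_j\smallerthan Q$ appears as a hypothesis rather than being dropped — is that $S_1$ is weaker than the full triangle inequality: it fires only on chains ordered by $\smallerthan$. So the main thing to be careful about is bookkeeping the order of the chain, with $O_r$ the smallest element and $Q$ the largest; this relies on the SuperM-Tree invariant $O_r\smallerthan O_j$ along every root-to-leaf path and on restricting attention to objects with $O_j\smallerthan Q$. For objects $O_j$ in the subtree with $O_j\not\smallerthan Q$ the bound genuinely may fail, which is why Algorithm~\ref{alg:ennquery} performs the size test first and never claims the bound for such objects.
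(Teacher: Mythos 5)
Your proposal is correct and follows essentially the same argument as the paper: establish the chain $O_r\smallerthan O_j\smallerthan Q$ from the tree invariant, apply $S_1$ to get $d(O_r,Q)\leq d(O_r,O_j)+d(O_j,Q)$, and combine with $d(O_r,O_j)\leq r(O_r)$ and the hypothesis to conclude $d(O_j,Q)>r(Q)$. No gaps.
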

Remark that, since all objects $O_j$ with $O_j\not\smallerthan Q$ are not included in the search anyways, Lemma~\ref{lem:prunetriangle} implies that the covered subtree with root $T(O_r)$ can be safely pruned from the search.

\begin{proof}
    Let $O_j$ be an arbitrary but fixed object in the covered tree of $O_r$ with $O_j\smallerthan Q$.
    The ordering $O_r \smallerthan O_j \smallerthan Q$ holds by definition of the structure of a SuperM-Tree.
    Now, by definition of metric subset spaces, the triangle inequality $d(O_r,Q)\leq d(O_r,O_j)+d(O_j,Q)$ holds.
    The structure of a SuperM-Tree requires $d(O_r,O_j)\leq r(O_r)$, thus $d(O_r,Q)\leq r(O_r)+d(O_j,Q)$, i.\,e. $d(O_j,Q) \geq d(O_r,Q)-r(O_r)$.
    Thus, if $d(O_r,Q) > r(Q)+r(O_r)$, then $d(O_j,Q) > r(Q)+r(O_r)-r(O_r) = r(Q)$.
\end{proof}

\begin{lemma}
    \label{lem:prunetriangle2}
    If $d(O_p,Q) > r(Q) + r(O_r) + d(O_p,O_r)$, then $d(O_j,Q)>r(Q)$ holds for each object $O_j$ with $O_j\smallerthan Q$ in the covered subtree of $O_r$.
\end{lemma}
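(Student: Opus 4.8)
The plan is to reuse the reasoning of Lemma~\ref{lem:prunetriangle} but to extend the transitive chain one step further, starting from the parent $O_p=P(O_r)$ of the routing object. Fix an arbitrary object $O_j$ in the covered tree of $O_r$ with $O_j\smallerthan Q$. By the structure of a SuperM-Tree the objects along any root-to-leaf path are ordered by the total preorder, so $O_p\smallerthan O_r$, and, as in the previous lemma, $O_r\smallerthan O_j$; together with the hypothesis $O_j\smallerthan Q$ this yields the chain $O_p\smallerthan O_r\smallerthan O_j\smallerthan Q$. Consequently every triangle inequality I need is licensed by axiom $S_1$.

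First I would apply the triangle inequality to $O_p\smallerthan O_r\smallerthan Q$ to obtain $d(O_p,Q)\leq d(O_p,O_r)+d(O_r,Q)$. Next, exactly as in the proof of Lemma~\ref{lem:prunetriangle}, the chain $O_r\smallerthan O_j\smallerthan Q$ together with $d(O_r,O_j)\leq r(O_r)$ (guaranteed by the covering-radius condition of the SuperM-Tree) gives $d(O_r,Q)\leq r(O_r)+d(O_j,Q)$. Substituting this into the previous estimate produces $d(O_p,Q)\leq d(O_p,O_r)+r(O_r)+d(O_j,Q)$, i.e.\ $d(O_j,Q)\geq d(O_p,Q)-d(O_p,O_r)-r(O_r)$.

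Finally I would plug in the hypothesis $d(O_p,Q)>r(Q)+r(O_r)+d(O_p,O_r)$: the right-hand side of the last inequality then exceeds $r(Q)+r(O_r)+d(O_p,O_r)-d(O_p,O_r)-r(O_r)=r(Q)$, so $d(O_j,Q)>r(Q)$. Since $O_j$ was an arbitrary object of the covered subtree with $O_j\smallerthan Q$, the whole subtree rooted at $T(O_r)$ can be pruned.

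The only delicate point — and the reason this lemma is stated separately rather than folded into Lemma~\ref{lem:prunetriangle} — is the validity of the four-element preorder chain $O_p\smallerthan O_r\smallerthan O_j\smallerthan Q$; once that is in place, the rest is a two-line computation. The benefit over Lemma~\ref{lem:prunetriangle} is operational: the quantity $d(O_p,Q)$ has already been computed when the parent node was processed, whereas $d(O_r,Q)$ has not, so this bound lets the algorithm discard the subtree without ever evaluating the (potentially expensive) distance $d(O_r,Q)$.
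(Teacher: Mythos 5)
Your proof is correct and follows essentially the same route as the paper: fix $O_j$, use the chain $O_p\smallerthan O_r\smallerthan O_j\smallerthan Q$, bound $d(O_r,O_j)$ by $r(O_r)$, and rearrange against the hypothesis. The only cosmetic difference is that you apply axiom $S_1$ twice over three-element subchains, while the paper writes the combined inequality $d(O_p,Q)\leq d(O_p,O_r)+d(O_r,O_j)+d(O_j,Q)$ in one step; the argument is the same.
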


\begin{proof}
    Let $O_j$ be an arbitrary but fixed object in the covered tree of $O_r$ with $O_j\smallerthan Q$.
    The ordering $O_p\smallerthan O_r\smallerthan O_j\smallerthan Q$ holds by definition of the structure of a SuperM-Tree.
    Hence, the triangle inequality $d(O_p,Q) \leq d(O_p,O_r)+d(O_r,O_j)+d(O_j,Q)$ holds.
    Since $d(O_r,O_j)\leq r(O_r)$ and $d(O_p,Q) > r(Q) + r(O_r) + d(O_p,O_r)$, we have $r(Q)+r(O_r)+d(O_p,O_r) < d(O_p,O_r)+r(O_r)+d(O_j,Q)$ and thus $r(Q)<d(O_j,Q)$.
\end{proof}

\subsection{Building the SuperM-Tree}
\label{sec:buildingmtree}

The SuperM-Tree is a generalized search tree (GiST \cite{gist}), i.\,e. algorithms for insertion and deletion of objects manage overflow and underflow of nodes using split and merge operations.

The \texttt{Insert} algorithm recursively descends the SuperM-Tree down to a leaf node in which to insert the object.
At each inner node, we follow the routing object with the smallest distance to the new object.
Two events may occur:
%\begin{enumerate}
    %\item
        1. The chosen path might violate the total preorder of the metric subset space, thus an exchange of the routing object is necessary.
%While descending the tree to the best leaf node for insertion of an object, the algorithm might come along a node with all larger objects than the object which is to be inserted.
%In this case, the new object replaces the nearest routing object.
%An update of the distance informations from the direct child objects to their new parent and the radius of the new routing objects is necessary as well.
%\end{enumerate}
    %\item
        2. The node might be overfilled, thus the insertion triggers a split of the node.
Also, at each routing object of the insertion path, we have to assure that the covering radius $r(O_r)$ covers the newly inserted element.
Algorithm~\ref{alg:insert} provides the pseudo code for the \texttt{insert} algorithm.

\begin{algorithm}
    \caption{Insert}
    \begin{lstlisting}
Input: node $N$, object $O$
if $N$ is a leaf node
  insert $O$ to $N$
  return
$O_r\deff \arg\min\left\{ d(O_r,O)\mid O_r\smallerthan O \right\}\cup$
             $\left\{ d(O,O_r)\mid O\smallerthan O_r \right\}$
$insert(T(O_r), O)$
if not $O_r\smallerthan O$
  exchange $O_r$ with $O$
  for each object $O_c$ in $T(O_r)$
    update dist to parent $d(O_r,O_c)$
  $r(O_r) \deff \max_{O_c\in T(O_r)}\left\{ d(O_r,O_c)+r(O_c) \right\}\hspace{-1cm}$
if $split(T(O_r))$
  exchange $O_r$ with promoted objects
  if $N$ is overfilled root node
    add new root node with promoted objects
    \end{lstlisting}
    \label{alg:insert}
\end{algorithm}

\subsubsection{Split Management}

As any other GiST tree, the SuperM-Tree provides a split strategy consisting of a \texttt{Promotion} and \texttt{Partition} algorithm (see Algorithm~\ref{alg:split}).
Spliting a node $N$ creates a new node $N'$.
The \texttt{partition} algorithm partitions the elements of the node $N$ among these two nodes.
We present the \emph{generalized hyperplane} strategy, which puts each object to its nearest neighbor (cf. Algorithm~\ref{alg:partition}).
The \texttt{promote} algorithm provides two routing objects (one for each new partition) which replace the old routing object in the parent node.
If $N$ was the root node, a new node filled with the two promoted routing objects serves as new root node of the tree.

\begin{algorithm}
    \caption{Split}
    \begin{lstlisting}
Input: node $N$, object $O$
if $|N|\leq capacity$
  return false
$(O_1,O_2) \deff promote(N)$
if $promote(N)$ did not find promotion objects:
  return false
$(N_1,N_2) \deff partition(N,O_1,O_2)$
return true, $O_1$, $N_1$, $O_2$, $N_2$
    \end{lstlisting}
    \label{alg:split}
\end{algorithm}
% set $T(O_1)\deff N_1$, $T(O_2)\deff N_2$
%
%if large nodes allowed and
%         $|N_1|\leq 1$ or $|N_2|\leq 1$
%  return false
%else

A combination of a specific implementation for the \texttt{promote} and \texttt{partition} algorithm is called a split policy.
Although we studied different promotion and partition strategies, we only present the most promising.

\begin{algorithm}
    \caption{Partition}
    \begin{lstlisting}
Input: node $N$, routing objects $O_1,O_2$
 $N_1\deff \left\{ O_j\in N\mid d(O_1,O_j)<d(O_2,O_j) \right\}$
 $N_2\deff N\setminus N_1$
 return $N_1,N_2$
    \end{lstlisting}
    \label{alg:partition}
\end{algorithm}

\subsubsection{Promotion strategies}

This subsection explains the details of our promotion strategy, shown in Algorithm~\ref{alg:promote}.
Note, although the implementation calls the \texttt{partition} algorithm multiple times, our final implementation in C++ actually merges both functions.

The motivation for our promotion strategy is based on the following fact:
Minimizing the overlap of covering areas in a node turned out to be crucial for the performance of multidimensional index structures including the R$^*$-Tree \cite{RStarTree} and the M-Tree \cite{mtree}.
Considering a node $N$ with two routing objects $S$ and $T$ in an M-Tree, the distance between both covering areas is $d(S,T)-r(S)-r(T)$.
When this value becomes negative, both covering areas overlap.
The overlap gets larger when this value shrinks.
Furthermore, the probability increases that a query overlaps both covering areas.
Minimizing the overlap decreases the probability of having to descend both paths.

Since we don't have the concept of \emph{volume of overlap} in metric spaces, we consider the negative of this value as overlap, i.\,e. $overlap(S,T)=r(S)+r(T)-d(S,T)$.

%Considering a node $N$ with two routing objects $S$ and $T$ in an M-Tree, the distance between both covering areas is $d(S,T)-r(S)-r(T)$.
%When this value becomes negative, both covering areas overlap.
%The overlap increases when this value shrinks.
%Furthermore, the probability increases that a query overlaps both covering areas.
%Minimizing the overlap decreases the probability of having to descend both paths.
%Hence, our promotion strategy penalizes a pair for promotion by the overlap they yield.
%Algorithm~\ref{alg:promote} shows the pseudo code for implementation.
%Note, that although the implementation calls the \texttt{partition} algorithm multiple times, our final implementation in C++ actually merges both functions.
%
\paragraph{Virtual distance and overlap:}
Since a metric subset distance function is not symmetric, the concept of overlap is not transferable to metric subset spaces directly:
Consider the sketched time series $S,T,U$ in Figure~\ref{fig:example_vdist}.
While $T$ and $U$ are not similar at all (and thus might yield no overlap), $S$ had a small distance to both routing objects if they became the promoted objects.
In this case, a query for a time series $Q$ ``containing'' $S$ would need to traverse both paths through the new routing objects $T$ and $U$.
\begin{figure}
    \begin{center}
        \includegraphics[width=.5\linewidth]{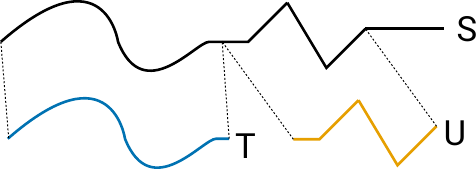}
    \end{center}
    \caption{Example for two very distinct time series ($T$ and $U$) being similar to a third one ($S$).}
    \label{fig:example_vdist}
\end{figure}

Motivated by this example, we introduce the concept of virtual distances regarding a set of third party objects.
\begin{definition}
    Let $\mathcal{R}$ be a set of metric subset objects and $S,T$ be two metric subset objects.
    Then 
    \begin{align*}
        v_{\mathcal R}(S,T) &\deff \min_{R\in\mathcal R, S,T\smallerthan R} \left\{ d(T,R)+d(S,R) \right\}
    \end{align*}
    is called the virtual distance of $S$ and $T$.
\end{definition}
Choosing two routing objects $S$ and $T$ with large virtual distance $v_{\mathcal R}(S,T)$ decreases the probability that a query object $Q$ is similar to objects from both sub trees.
This approach follows the idea of minimizing the overlap.
Thus, we promote a pair of objects minimizing the \emph{virtual overlap} $r(S)+r(T)-v_{\mathcal R}(S,T)$ where $r(S)$ and $r(T)$ are the radii after partitioning and $\mathcal R$ is the set of (routing) objects within the node (cf. Line~$11$).

\begin{algorithm}
    \caption{Promote}
    \begin{lstlisting}
Input: node $N$
$C\deff \left\{ O\mid\forall O_i\in N: O\smallerthan O_i \right\}$
if $|C|=1$ // make sure $|C|\geq 2$
  $C\deff C\cup\left\{ O\mid\forall O_j\in N\setminus C: I\smallerthan O_j \right\}$
$\phi\deff\infty$
for each pair $O_a,O_b\in C$
  $N_a,N_b\deff partition(N,O_a,O_b)$
  if large nodes allowed and
       $|N_a|\leq 1$ or $|N_b|\leq 1$
    skip and ignore $O_a,O_b$
  $p=r(O_a)+r(O_b)-v_N(O_a,O_b)$
  if $p< \phi$
    $O_1\deff O_a;$ $O_2\deff O_b;$ $N_1\deff N_a;$ $N_2\deff N_b$
return $O_1,O_2$
    \end{lstlisting}
    \label{alg:promote}
\end{algorithm}

\paragraph{Strict order in insertion paths:}

The SuperM-Tree maintains the strict order $T\smallerthan P$ for objects $T$ with parent $P$.
In order to respect this property, the promotion strategy only chooses two of the smallest elements of a node (cf. Line~$6$).

\subsubsection{Partition strategy}
\label{sec:partitioningstrategy}

Although numerous possibilities exist for designing partitioning stratregies, we chose the geometric approach for a simple reason:
The trees built with the geometric approach (cf. \emph{generalized hyperplane} in \cite{mtree}) are superior in query performance compared to trees built with the balanced strategy (i.\,e. spliting to partitions of equal size).
Since this insight is the same as in \cite{mtree}, we omit presenting experimental results.

However, we observed the following effect in our examples:
Given a very small object (e.\,g. a time series of length 1) and a larger object, the first one has a lower expected distance to a third object than the second one.
When promoting two objects, one of them is usually smaller than the other.
Hence, most other objects of the node are more similar to the smaller object and the partitioning became more and more unbalanced.
At some point, partitions even degraded, such that only one object (the larger promoted object) is split from the rest of the node.
The resulting tree degraded to a large trunk with lots of very thin branches resulting in bad query performance.

We could prevent this behaviour by ignoring the maximum capacity of a node:
Instead of strictly splitting nodes which exceed the capacity, we look at the resulting partitions and only perform the split, if the partitions are not degenerated (cf. Line~$8$).
In Section~\ref{sec:evaluation}, we show that this strategy outperforms the strict split execution.

\section{Evaluation}
\label{sec:evaluation}

In this section we provide experimental results on the performance in building and querying the SuperM-Tree.%TODO\footnote{We provide a standalone C++-implementation on our website: \url{http://ommited.here/blind/review}.}

A second goal is to evaluate the flexibility of the concept of metric subset spaces.
Therefore we tested three modular metric subset distance functions, namely the L2~distance on windows, the Dog-Keeper~distance on subsequences, and the Hausdorff~distance on subsets.
For each application we compared both split policies described in Section~\ref{sec:buildingmtree} against a linear scan query algorithm.

\paragraph*{Runtime comparisons:}
In order to understand the influence of different properties of the datasets as well as parameters of the tree, most experiments were based on synthetic datasets.
For the sequence based distance functions ($d_2$ and \dk), we generated random sequences of uniformly distributed lengths between $1$ and $128$.
For the set based distance function, we generated random sets of uniformly distributed cardinality between $1$ and $32$.
We varied the dataset size from $2^8$ to $2^{23}$.
The elements of all sequences and sets are uniformly distributed over a fixed finite interval.
Results regarding query performance are averaged over 100 queries.
Since the node capacity had no big impact on the performance (i.\,e. runtimes were stable for capacities between $64$ and $512$), we fixed the capacity to $128$.
Figure~\ref{fig:timeinsert} shows the time needed for building the tree with successive insert operations and Figure~\ref{fig:time} shows the average query times.
Hereby, the linear scan has been improved with lower bounds already (cf. \cite{ImprovedDK}).

\begin{figure}
    \centering
    \includegraphics[width=.75\linewidth]{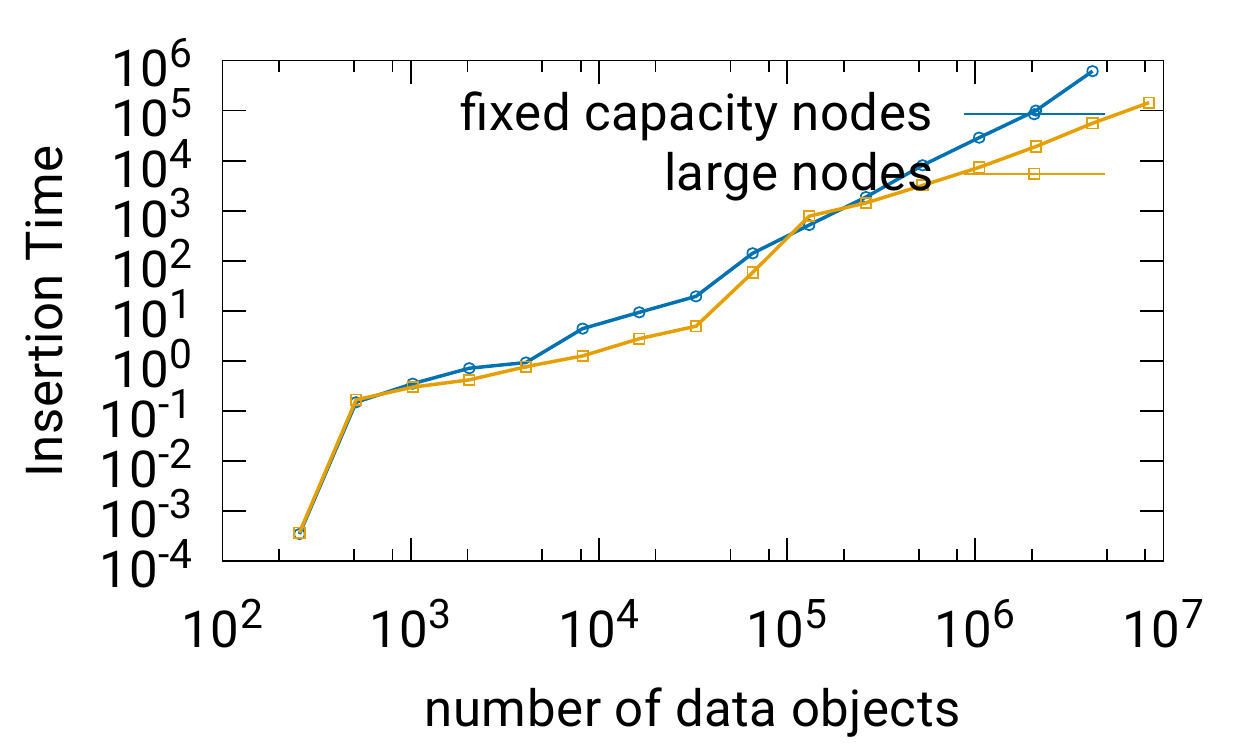}
    \includegraphics[width=.75\linewidth]{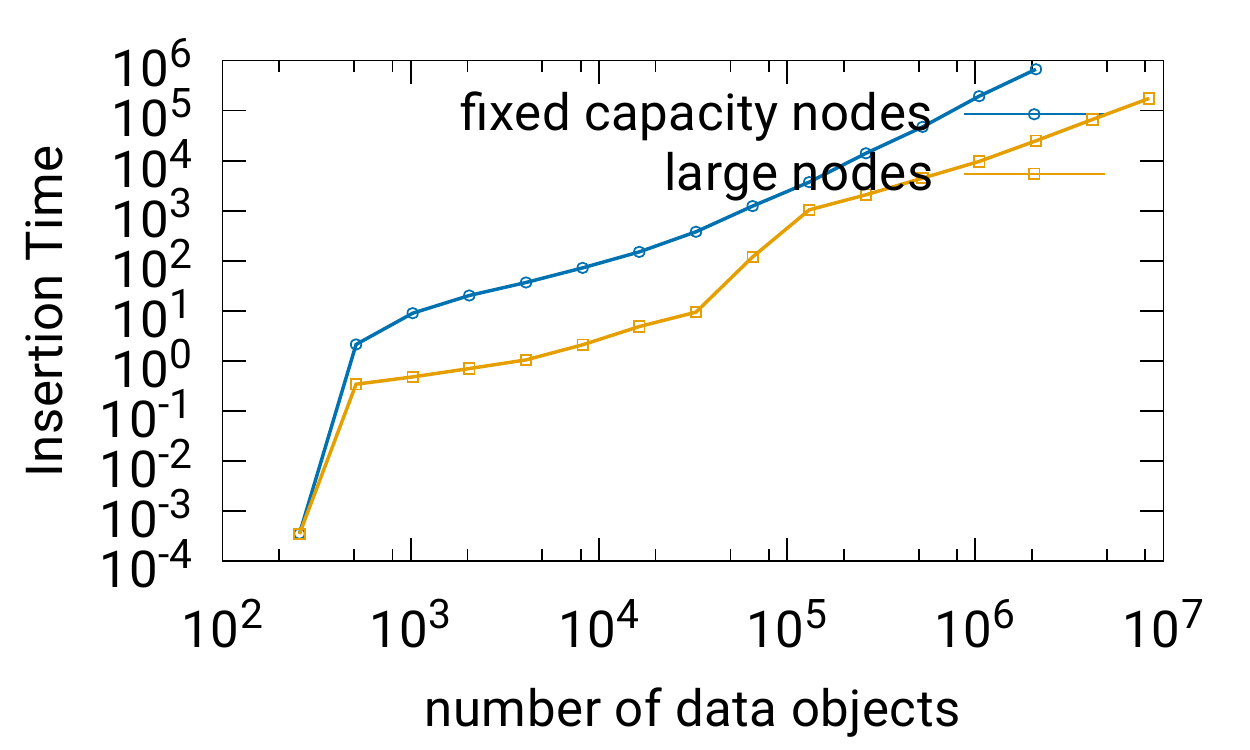}
    \includegraphics[width=.75\linewidth]{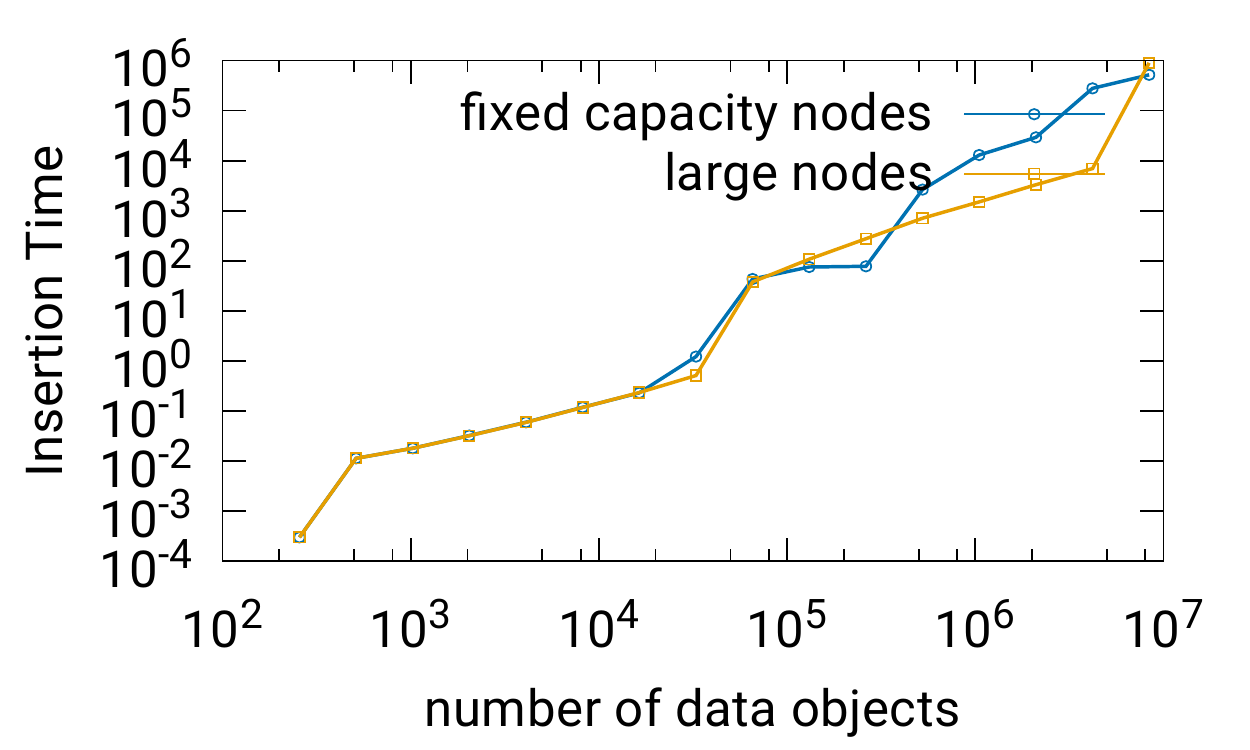}
    \caption{Building time for synthetic data in seconds (top: L2, center: \sdk, bottom: \shd).}
    \label{fig:timeinsert}
\end{figure}

\begin{figure}
    \centering
    \includegraphics[width=.75\linewidth]{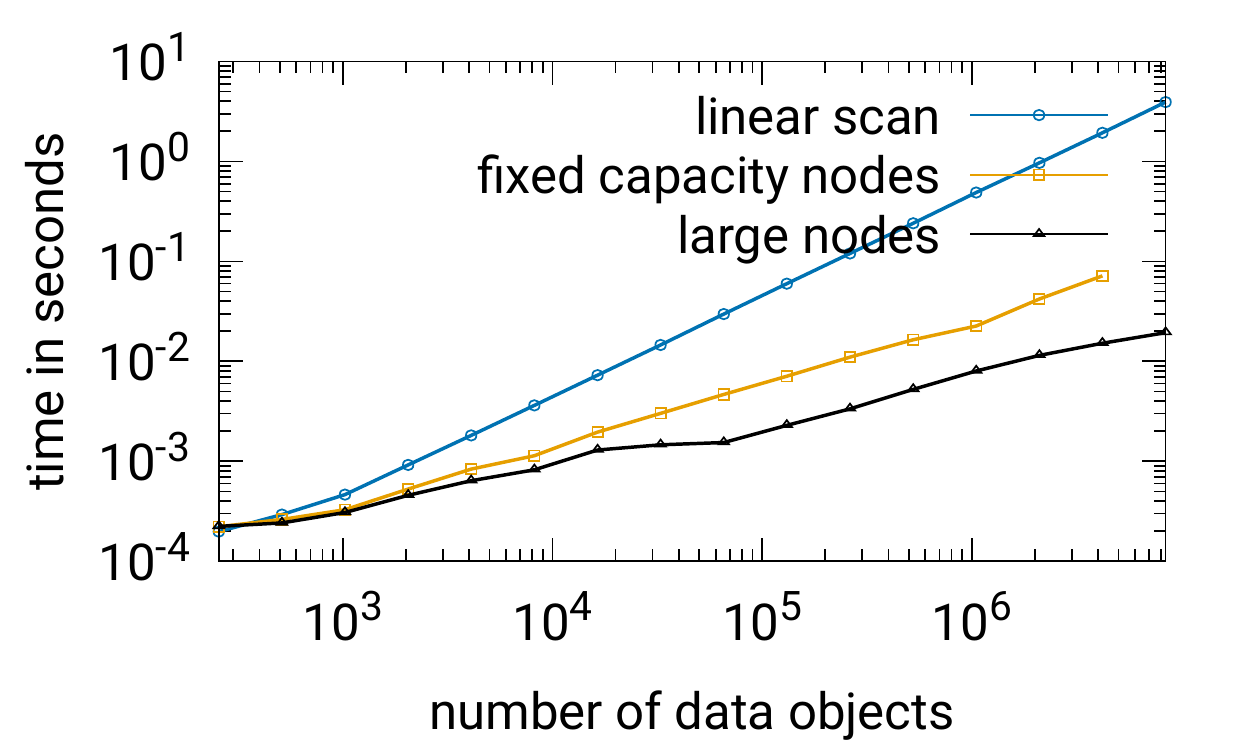}
    \includegraphics[width=.75\linewidth]{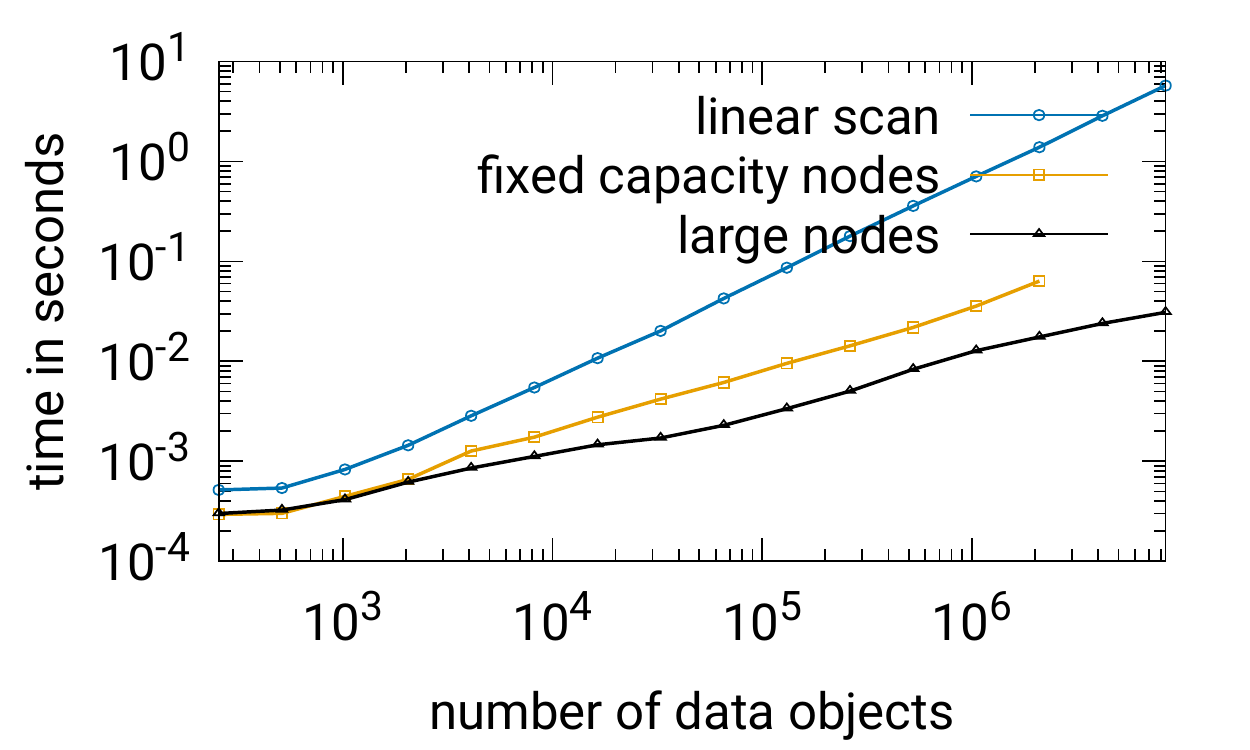}
    \includegraphics[width=.75\linewidth]{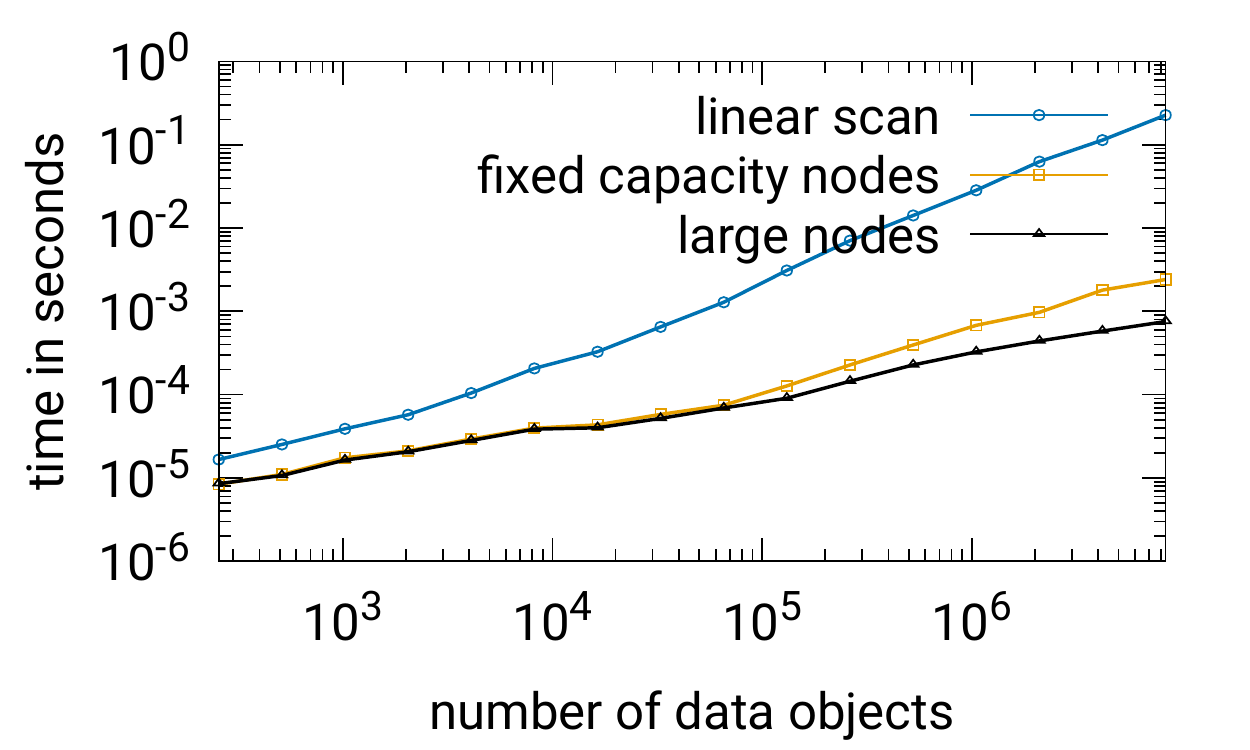}
    \caption{Average query time of 100 1-NN queries on synthetic data(top: L2, center: \sdk, bottom: \shd).}
    \label{fig:time}
\end{figure}

We ran the same experiments for bot sequence based applications on the well known UCR time series benchmark dataset \cite{UCIDatasets}.
We built the trees with each training set respectively and averaged the query times for all samples from the test training set.
To achieve similar properties as in the synthetic datasets, we cropped the training sequences down to random subsequences with a uniformly distributed length between $1$ and $128$.
Figure~\ref{fig:speedup_realworld} shows that the speedup follows the same trend as in the experiments with the synthetic datasets.

\begin{figure}
    \begin{center}
        \includegraphics[width=.75\linewidth]{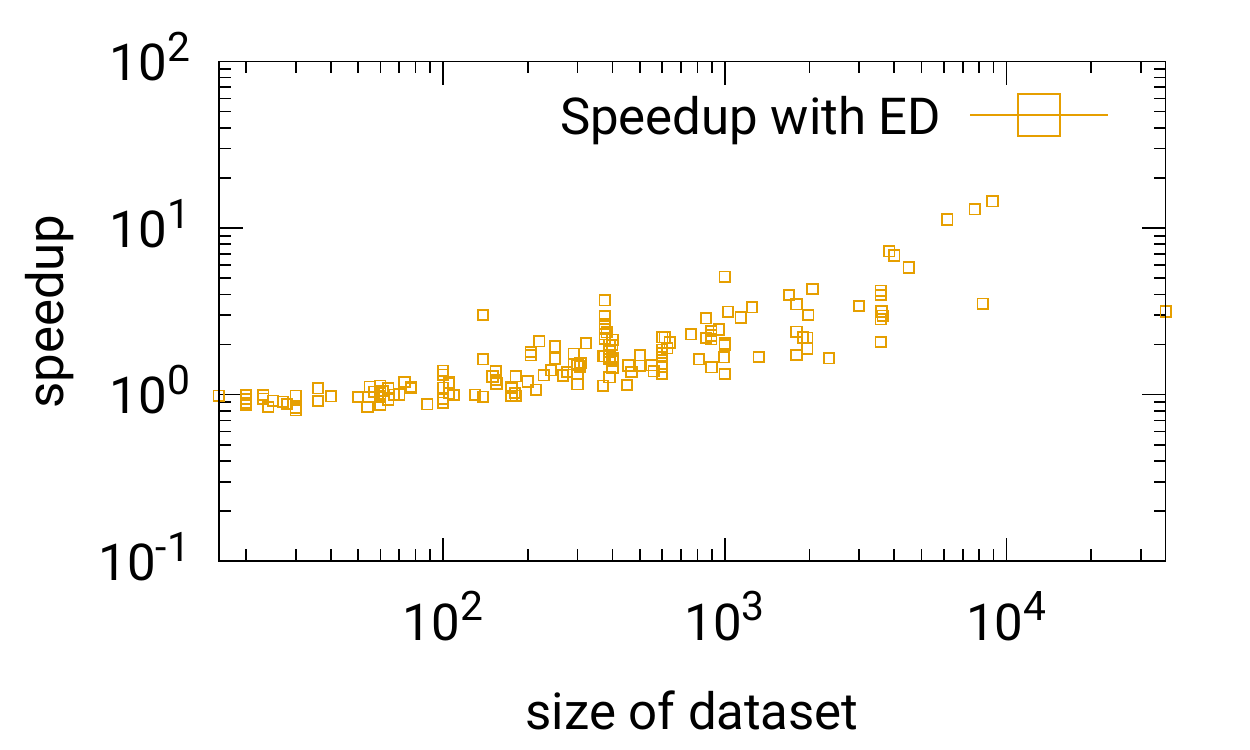}
        \includegraphics[width=.75\linewidth]{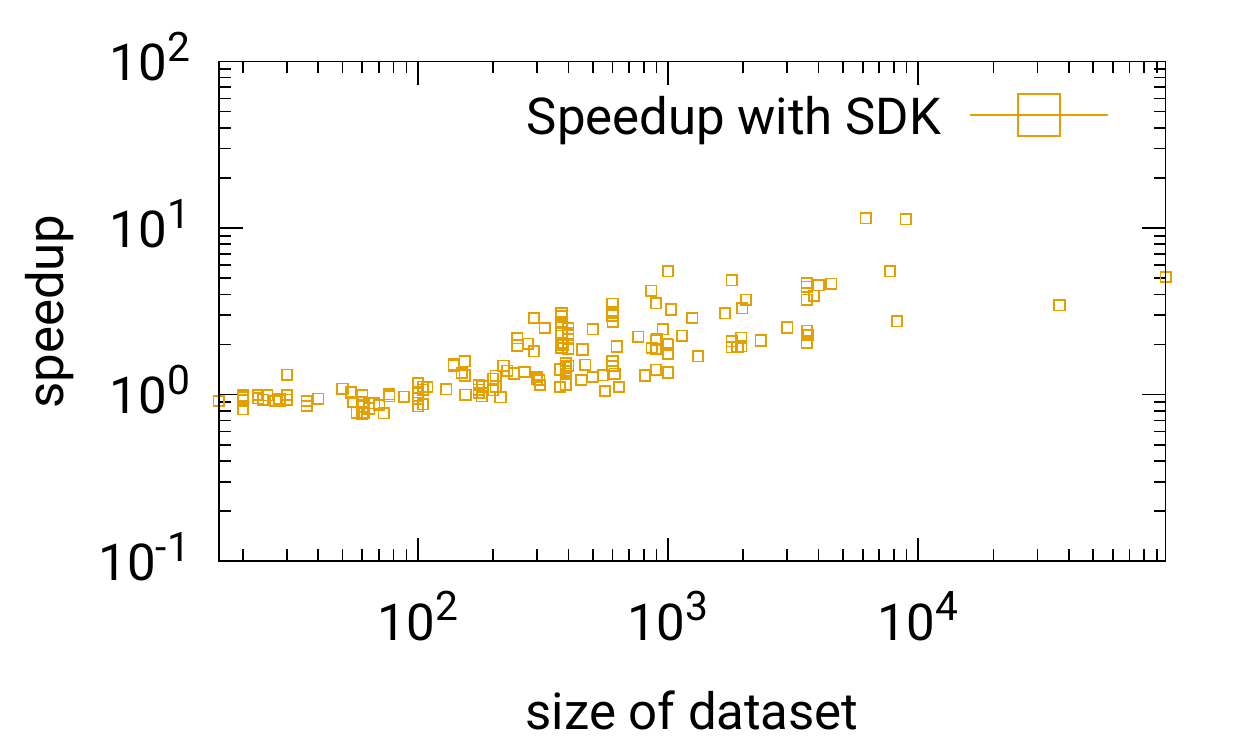}
    \end{center}
    \caption{Average Speedup against linear scan on UCR datasets with L2 (top) and $\sdk$ (bottom)}
    \label{fig:speedup_realworld}
\end{figure}

\paragraph*{Fixed capacity vs. large nodes:}
Figures\ref{fig:timeinsert} shows that the dynamic capacity split policy (\emph{large nodes}) outperforms the static capacity split policy (\emph{fixed capacity}) by more than one order of magnitude while building the tree.
By keeping track of the size of the partitions with the fixed capacity strategy, we could observe a drastic fall down towards a mean of $1$ for small datasets already (i.\,e. a few thousand objects).
This observation confirmed our assumption that the tree degenerates as explained in Section~\ref{sec:partitioningstrategy}.

\paragraph*{The effect of dimensionality:}
The \emph{curse of dimensionality} is a well studied effect and occurs in the SuperM-Tree as well.
The effect appears as seen in all other multi-dimensional index structures.
Due to space limitations we omit presenting experimental results.
TODO: reference!

\paragraph*{Variance of object's size:}
As mentioned earlier, we cropped the sequences down to subsequences of random lengths.
We could observe, that there was no speedup against the linear scan at all if we did not do that.
Hence, the performance strongly depends on having small elements in the dataset.

While spliting overfilled nodes, the smaller elements are being promoted as routing objects.
In the higher levels of the tree (closer to the root), these elements form a space with lower dimensionality.
Thus, pruning works better in the higher levels already and the overall number of pruned elements increases.
On the other hand, datasets with only large objects form a high dimensional space in the higher levels already and thus the curse of dimensionality affects the query performance.

Extending the metric subset space with a function generating smaller (promotion) objects from a given set of objects could solve that problem.
Alternatively, small random elements could be inserted as meta elements manually before or during the whole insertion process.
However, this opens new research areas for each metric subset space respectively.

\section{Conclusion}

We introduced metric subset spaces as a semantic extension of metric spaces.
Three applications from different fields show the flexibility of this concept.

We proposed the SuperM-Tree, a data structure for metric subset spaces derived from the M-Tree.
The SuperM-Tree provides nearest neighbor queries searching for \emph{subobjects} (e.\,g. subsequences or subsets).
Our experiments show, that the index structures outperform the linear scan by multiple orders of magnitude, growing with the size of the dataset.
However, we also observed, that we need small elements in the dataset which act as low dimensional routing objects.
These small elements could be inserted as meta elements manually or generated in a new promotion strategy depending on the specific metric subset space.

%\section*{Acknowledgements}
%We thank Kevin Trogant for implementing the main parts of the time series synthesizers.
%We also thank Jochen Taeschner who we had valuable discussions with and helped us improving the presentation of paper.

\bibliographystyle{plain}
\bibliography{literature}

\end{document}